\newtheorem{assumption}{Assumption}
\newtheorem{theorem}{Theorem}
\newtheorem{corollary}{Corollary}
\newtheorem{remark}{Remark}
\newcommand{\longdash}{-}
\newcommand{\RNum}[1]{\uppercase\expandafter{\romannumeral#1\relax}}
\DeclareMathOperator{\sign}{sign}
\title{\LARGE \bf
	Robust Control Design Using a Hybrid-Gain Finite-Time Sliding-Mode Controller
}
\author{Amit Shivam$^{1}$, Kiran Kumari$^{2}$, and Fernando A.C.C. Fontes$^{1}$ 
	\thanks{$^{1}$Amit Shivam and Fernando A.C.C. Fontes are with the Department of Electrical and Computer Engineering, University of Porto, Porto, Portugal, Email:
		{\tt\small amit.shivam1407@gmail.com}, and {\tt\small faf@fe.up.pt}}
	\thanks{$^{2}$Kiran Kumari is with Department of Electrical Engineering, Indian Institute of Science, Bengaluru, India, Email: {\tt\small kirank@iisc.ac.in}}
}
\begin{document}
	
	\maketitle
	\thispagestyle{empty}
	\pagestyle{empty}


\begin{abstract}
This paper proposes a hybrid-gain finite-time sliding-mode control (HG-FTSMC) strategy for a class of perturbed nonlinear systems. 
The controller combines a finite-time reaching law that drives the sliding variable to a predefined boundary layer with an inner mixed-power or exponential law that guarantees rapid convergence within the layer while maintaining smooth and bounded control action. 
The resulting control design achieves finite-time convergence and robustness to matched disturbances, while explicitly limits the control effort.
The control framework is first analyzed on a perturbed first-order integrator model, and then extended to Euler-Lagrange (EL) systems, representing a broad class of robotic and mechanical systems. 
Comparative simulations demonstrate that the proposed controller achieves settling times comparable to recent finite-time approaches~\cite{li2021simultaneous}, while substantially reducing the control effort. 
Finally, trajectory-tracking simulations on a two-link manipulator further validate the robustness and practical feasibility of the proposed HG-FTSMC approach.
\end{abstract}

				
	\begin{keywords}
		Finite and Fixed-time sliding mode control, matched disturbances, robust control
	\end{keywords}

    \section{Introduction}

Finite-time sliding-mode control (FT-SMC) has attracted significant attention due to its ability to enforce rapid, precise, and robust convergence of system states within a strictly finite interval.
Unlike classical asymptotic control methods, FT-SMC explicitly guarantees bounded settling time, thereby offering superior transient performance and strong resilience against matched disturbances and uncertainties \cite{ds1998continuous,feng2002non,feng2013nonsingular,basin2019finite,song2023prescribed}.
Such properties make FT-SMC well suited for safety-critical and high-performance application, including robotic manipulation \cite{galicki2015finite},  spacecraft attitude regulation \cite{du2011finite}, and cooperative multi-agent coordination \cite{li2017finite}, where responsiveness and disturbance rejection are of paramount importance.

Despite these advantages, classical FT-SMC formulations suffer from two fundamental limitations that restrict their feasibility to real-world platforms, such as unmanned aerial vehicles (UAVs), ground robots, and other actuator-constrained systems.
First, many FT-SMC designs rely on nonlinear gains that grow rapidly with the tracking error, such as exponential or reciprocal-power terms \cite{hu2024modified,shang2025pmsm}.
Although such gains theoretically accelerate convergence, they usually generate excessively large control signals during the initial transient, particularly when the system begins far from the sliding surface.
For physical systems with strict torque, thrust, or steering limits, such large control demands can lead to actuator saturation, degraded tracking accuracy, and even instability under bounded disturbances. 
Second, although several improved finite-time or fixed-time (FxT) approaches~\cite{polyakov2011nonlinear,zuo2015non,moulay2021robust,labbadi2023design} attempt to moderate this behavior, most retain fixed nonlinear exponents or gain structures that still produce large transient peaks when initial errors are significant, reducing their practical utility.

Norm-normalized (NN) discontinuous functions have been developed to address chattering and increase smoothness in sliding-mode designs~\cite{li2021simultaneous,li2021time}. 
By scaling the discontinuous term using the state norm, NN-based controllers reduce high-frequency switching near the origin and achieve simultaneous convergence of all states (SATO property). 
However, this normalization inherently limit the control amplitude to the magnitude of the state vector. 
Consequently, NN controllers may increase the required actuation for large initial errors, while attenuating the action when the system is far from the sliding manifold $s$, which is exactly opposite of what is desired for robust and fast transient response. 
Additionally, because the normalization depends on $\|s\|$, NN methods may exhibit sensitivity or pseudo-singular behavior as the state norm approaches zero, making their deployment complicated in realistic applications. 
Thus, while NN-based approaches mitigate chattering, they do not fully resolve the need for bounded, actuator-safe finite-time controllers with predictable transient behavior.


Motivated by these limitations, this work develops a HG-FTSMC strategy that explicitly incorporates actuator limitations while preserving fast and robust finite-time convergence.
The key idea is to design a two-region state-dependent gain:
\begin{itemize}
    \item \emph{an outer finite-time region}, where a bounded mixed-power gain accelerates convergence without producing large control peaks, even under large initial errors; and
\item an \textit{inner fixed-time region} employing a smooth polynomial/exponential gain to ensure state-independent convergence once trajectories approach a predefined neighborhood of the sliding manifold.
\end{itemize}
This hybrid mechanism enables flexible shaping of the transient behavior, ensures actuator-safe control effort across the entire state space, and eliminates the need for normalization unlike NN-based schemes.
Moreover, the design maintains robustness against bounded disturbances while ensuring smooth transition between the two regions.

The main contributions of this paper are as follows:
\begin{enumerate}
    \item A new hybrid-gain finite-/fixed-time SMC framework that integrates a bounded outer finite-time gain with an inner fixed-time gain, providing actuator-feasible control effort and smooth yet rapid convergence.
    \item A rigorous Lyapunov stability analysis establishing finite-time reaching into a predefined boundary layer and fixed-time convergence within the boundary layer, with explicit closed-form expressions for both settling intervals.
    \item A comparative study with NN-based SMC approach (SATO) ~\cite{li2021simultaneous}, demonstrats that the proposed design achieves comparable or faster convergence with significantly reduced transient control peaks and without singularity issues.
    \item Simulations results on a perturbed first-order integrator model and a two-link manipulator, validating the robustness and practical implementability of the proposed HG-FTSMC controller.
\end{enumerate}

    The remainder of this paper is organized as follows. The problem formulation is discussed
in Section~\ref{sec:problem}. In Section~\ref{sec: proposed controller design}, the proposed control scheme
will be derived and its stability will be analyzed, followed
by performance validation using the numerical simulation
results in Section~\ref{sec: simulation results}. Finally, Section~\ref{sec:conclusion} concludes the paper.
 
\section{Problem Formulation}
\label{sec:problem}

We consider nonlinear systems of the form
\begin{equation}
\dot{x}(t)=f(x(t),u(t)) + d(t), 
\label{eq:general}
\end{equation}
where $x\in\mathbb{R}^n$, $u\in\mathbb{R}^m$ is the control input, and 
$d(t)\in\mathbb{R}^n$ is a matched disturbance.

\begin{assumption}
\label{ass:dist}
The disturbance $d(t)$ is bounded componentwise as
\begin{equation}
|d_i(t)| \le \bar d_i,\qquad \bar d_i>0.
\end{equation}
\end{assumption}

\noindent
\textit{Control objective.}
Given any initial condition $x_0$, design a bounded control law $u(t)$ such that the closed-loop trajectories satisfy
\[
x(t)\to 0
\]
in finite or fixed time, while ensuring robustness with respect to any disturbance satisfying Assumption~\ref{ass:dist}, and guaranteeing uniformly bounded control effort.

\vspace{0.8em}
To illustrate and validate the proposed hybrid-gain finite-time sliding-mode control framework, two representative subclasses of~\eqref{eq:general} are considered.

\paragraph*{1) First-order perturbed integrator}
\begin{equation}
\dot{x}(t)=u(t)+d(t),
\label{eq:first-order}
\end{equation}
which captures the dominant dynamics of sliding variables, and serves as the canonical setting for establishing finite-/fixed-time convergence and robustness properties.

\paragraph*{2) Euler--Lagrange (EL) system}
\begin{equation}
M(q)\ddot q + C(q,\dot q)\dot q + g(q)=\tau + d(t),
\label{eq:EL}
\end{equation}
where $q\in\mathbb{R}^n$,
$M(q)$ is symmetric positive definite,
$C(q,\dot q)$ satisfies the standard EL skew-symmetry property,
and $\tau\in\mathbb{R}^n$ is the control input.
The disturbance $d(t)$ is matched as in Assumption~\ref{ass:dist}.
Model~\eqref{eq:EL} encompasses a broad class of robotic and mechanical systems and enables the extension of the proposed method to multi-DOF dynamics~\cite{du2011finite,galicki2015finite,li2017finite}.

\section{Proposed Controller Design}\label{sec: proposed controller design}
We now introduce a hybrid gain–based sliding-mode control law that eliminates the exponential growth in control effort~\cite{labbadi2023design}, particularly under large initial errors, while still guaranteeing finite-time (or practical fixed-time) convergence in the presence of bounded disturbances. This formulation provides a smoother alternative and its settling-time behavior relative to the SATO framework is discussed next.
\subsection{Intuitive Settling-Time Estimate for the SATO Controller}
Following~\cite{li2021simultaneous}, let $s\in\mathbb{R}^n$ be the sliding vector and define
the infinity norm
\begin{equation}
r(t) := \|s(t)\|_\infty = \max_i |s_i(t)|.
\end{equation}
Consider the nominal SATO dynamics
\begin{equation}
\dot s = -K\,\frac{s}{\|s\|_2}+d(t), \qquad \|d(t)\|_\infty \le \bar d,\quad K>0,
\label{eq:sato-dyn}
\end{equation}
and let $V(s)=r(t)$.
Using the upper-right derivative $D^+V$ and standard norm inequalities,
one obtains
\begin{equation}
D^+ r \;\le\; -K + \bar d \;=\; -\big(K-\bar d\big).
\label{eq:sato-dr}
\end{equation}
Thus $r(t)$ decreases at a constant rate.  
For any region with radius $\varepsilon>0$, the time required for the SATO controller
to enter the set $\{\,\|s\|_\infty\le\varepsilon\,\}$ satisfies
\begin{equation}
T_{\mathrm{SATO}}
\;\le\;
\frac{r(0)-\varepsilon}{K-\bar d}.
\label{eq:sato-T}
\end{equation}

\begin{remark}[Interpretation of the SATO settling time]
The bound~\eqref{eq:sato-T} shows that the norm $\|s\|_\infty$ decreases
\emph{linearly} with time, with slope $K-\bar d$.
Hence:
\begin{itemize}
  \item The convergence time grows linearly with the initial error magnitude $r(0)$; large initial conditions require proportionally longer time to reach the boundary layer.
  \item The control magnitude is essentially constant and equal to $K$ away from the origin (norm-normalized sign), so increasing $K$ to accelerate convergence directly increases the applied control effort.
  \item The SATO design guarantees simultaneous arrival (ratio persistence) in the disturbance-free case, but does not provide a fixed-time bound: $T_{\mathrm{SATO}}$ is not uniform with respect to $r(0)$.
\end{itemize}
\end{remark}

\begin{remark}[Limitations and motivation for proposed design]
The formulation~\eqref{eq:sato-dyn}--\eqref{eq:sato-T} highlights a structural limitation
of the SATO approach: the decay rate of $\|s\|_\infty$ is constant and entirely determined
by the single gain $K$, leading to a trade-off between fast convergence and large,
uniform control effort. In particular, for very large initial errors one must choose
a large $K$ to reduce $T_{\mathrm{SATO}}$, which in turn yields high-amplitude control
action over the entire transient.

In contrast, the proposed hybrid-gain controller discussed subsequently in Theorem~\ref{thm:firstorder-hybrid}
replaces the constant gain $K$ by a state-dependent gain
\[
G_{\mathrm{hyb}}(x) =
\begin{cases}
G_{\mathrm{out}}(x)=k_0+k_1\dfrac{|x|^\gamma}{\varepsilon_0^\gamma+|x|^\gamma},
   & |x|>\varepsilon,\\[1.2ex]
G_{\mathrm{in}}(x)=a|x|^\gamma+b|x|^\alpha,
   & |x|\le\varepsilon,
\end{cases}
\]
with $k_0>\bar d$, $k_1>0$, $0<\gamma<1<\alpha$.
For the outer region, the Lyapunov analysis leads to
\[
D^+|x|
\;\le\;
-\big(k_0-\bar d\big)
- k_1\,\frac{|x|^\gamma}{\varepsilon_0^\gamma+|x|^\gamma},
\]
which yields an explicit boundary layer-entry time $T_{\mathrm{out}}$ strictly smaller
than~\eqref{eq:sato-T} for the same robustness margin ($k_0\simeq K$), while keeping
the outer control uniformly bounded by $k_0+k_1$.
Inside the boundary layer, the mixed-power inner law $G_{\mathrm{in}}(x)$ guarantees a
fixed-time bound independent of the initial condition.
Thus the overall HG--FTSMC design overcomes the constant-rate limitation of
the SATO formulation, providing faster convergence for large errors and a
state-independent completion time, with bounded control effort.
\end{remark}
\subsection{First-Order System}
\begin{theorem}
\label{thm:firstorder-hybrid}
Given the system dynamics~\eqref{eq:first-order}, the proposed control law
\begin{equation}
u(x)\;=\; -\,G_{\mathrm{hyb}}(x)\,\mathrm{sgn}(x),
\label{eq: first order control law}
\end{equation}
and, hybrid gain 
\begin{align}
  G_{\mathrm{hyb}}(x) \;=\;
\begin{cases}
G_{\mathrm{out}}(x), & |x|>\varepsilon,\\[0.3ex]
G_{\mathrm{in}}(x),  & |x|\le \varepsilon,
\end{cases}
\label{eq: hybrid gain}
\end{align}
drives the system into equilibrium (at origin).
 For any $\varepsilon>0$ such that $\varepsilon \in (0,\varepsilon_o]$, the variable $x(t) $  enters the boundary layer 
$|x|\le\varepsilon$ in finite time, employing the hybrid gain  proposed as
\begin{align}
\begin{split}
G_{\mathrm{out}}(x)
&= k_0 + k_1\,\frac{|x|^{\gamma}} {\varepsilon_0^{\gamma}+|x|^{\gamma}}, \\
& \text{for} \quad k_0>\bar d,\; k_1>0,\; \varepsilon_0>0,\; 0<\gamma<1. 
\label{eq:Gout}
\end{split}
\end{align}
\begin{align}
    \begin{split}
 G_{\mathrm{in}}(x)
&= a\,|x|^{\gamma} + b\,|x|^{\alpha}, \\
& \text{for} \quad a>0,\; b>0,\; 0<\gamma<1<\alpha.
\label{eq:Gin-poly}       
    \end{split}
\end{align}
 The convergence time follows the bound 
\begin{enumerate}
\item[(i)] \textit{FT reaching to the boundary layer.} 
From any $x(0)=x_0$, the state enters the boundary layer $|x|\le \varepsilon$ in finite time $T_{\mathrm{out}}$ bounded by
\begin{align}
T_{\mathrm{out}}
&\le \frac{1}{k_0-\bar d}\,
\ln\!\frac{\max\{|x_0|,\varepsilon_0\}}{\max\{\varepsilon,\varepsilon_0\}}
\notag\\
&\hspace{1.0ex}
+ \frac{\varepsilon_0^{\gamma}}{k_1(1-\gamma)}
\Big(\varepsilon^{\,1-\gamma}-\min\{|x_0|,\varepsilon_0\}^{\,1-\gamma}\Big)_+ .
\label{eq:Tout-bound}
\end{align}
\item[(ii)] \textit{FxT convergence inside the boundary layer.}
Once $|x|\le \varepsilon$, the origin is reached within a \emph{state-independent} time
\begin{equation}
T_{\mathrm{in}}
\;\le\; \frac{1}{a(1-\gamma)} \;+\; \frac{1}{b(\alpha-1)} .
\label{eq:Tin-poly}
\end{equation}
The overall convergence time satisfies,  Total settling time $T_\mathrm{tot}$
\begin{equation}
T_{\mathrm{tot}} \;\le\; T_{\mathrm{out}} + T_{\mathrm{in}}.
\end{equation}
\end{enumerate}
\end{theorem}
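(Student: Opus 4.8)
The plan is to run a Lyapunov/comparison argument on the candidate $V(x)=|x|$, treating the two regions separately since the gain switches at $|x|=\varepsilon$. For the reaching claim (i), I would fix $x\neq 0$ with $|x|>\varepsilon$ and evaluate the upper-right Dini derivative along~\eqref{eq:first-order}: with the control~\eqref{eq: first order control law} one has $D^+|x|=\mathrm{sgn}(x)\dot x=-G_{\mathrm{out}}(x)+\mathrm{sgn}(x)\,d(t)$, since $\mathrm{sgn}(x)^2=1$, and Assumption~\ref{ass:dist} bounds the last term by $\bar d$. Substituting $G_{\mathrm{out}}$ from~\eqref{eq:Gout} reproduces exactly the differential inequality recorded in the preceding remark, \[ D^+|x|\le -(k_0-\bar d)-k_1\,\frac{|x|^{\gamma}}{\varepsilon_0^{\gamma}+|x|^{\gamma}}. \] Because $k_0>\bar d$, the right-hand side is strictly negative and bounded away from zero on $|x|>\varepsilon$, so $|x|$ is strictly decreasing and must reach the layer in finite time; this settles the qualitative half of (i).

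To extract the explicit estimate~\eqref{eq:Tout-bound}, I would pass to the scalar comparison variable $r=|x|$, separate variables, and write $T_{\mathrm{out}}\le\int_{\varepsilon}^{|x_0|}\bigl[(k_0-\bar d)+k_1 r^{\gamma}/(\varepsilon_0^{\gamma}+r^{\gamma})\bigr]^{-1}dr$. The integrand is non-elementary, so the decisive step is to split the range at $r=\varepsilon_0$, where the saturating ratio $r^{\gamma}/(\varepsilon_0^{\gamma}+r^{\gamma})$ crosses $1/2$, and to bound that ratio from below in each regime (a lower bound on the decay rate is what yields an upper bound on the time). For $r\ge\varepsilon_0$ the ratio exceeds a fixed constant, so the decay is controlled by the robustness margin $k_0-\bar d$ and contributes the first term of~\eqref{eq:Tout-bound}, whose argument compares $\max\{|x_0|,\varepsilon_0\}$ with $\varepsilon_0$; for $r\le\varepsilon_0$ the power term dominates and, after bounding $\varepsilon_0^{\gamma}+r^{\gamma}$ by a multiple of $\varepsilon_0^{\gamma}$, integration of the resulting $r^{\gamma}$ law produces the power-law term scaled by $\varepsilon_0^{\gamma}/[k_1(1-\gamma)]$, with the $(\cdot)_+$ and $\min\{|x_0|,\varepsilon_0\}$ bookkeeping recording whether the trajectory actually traverses the band $[\varepsilon,\varepsilon_0]$. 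Concatenating the two phases gives~\eqref{eq:Tout-bound}.

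For the interior claim (ii), the same Dini computation on $|x|\le\varepsilon$ with $G_{\mathrm{in}}$ from~\eqref{eq:Gin-poly} gives $D^+|x|\le -\bigl(a|x|^{\gamma}+b|x|^{\alpha}\bigr)+\bar d$. Analyzing the nominal power-law dynamics $\dot r=-(ar^{\gamma}+br^{\alpha})$, the settling time is bounded by $\int_0^{\infty}(ar^{\gamma}+br^{\alpha})^{-1}dr$, which I would split at $r=1$: on $(0,1]$ the estimate $ar^{\gamma}+br^{\alpha}\ge ar^{\gamma}$ yields $1/[a(1-\gamma)]$, and on $[1,\infty)$ the estimate $\ge br^{\alpha}$ yields $1/[b(\alpha-1)]$, producing the state-independent bound~\eqref{eq:Tin-poly}. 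The total estimate $T_{\mathrm{tot}}\le T_{\mathrm{out}}+T_{\mathrm{in}}$ then follows by concatenation, since $|x|$ decreases monotonically and, once below $\varepsilon$, the trajectory cannot re-enter the outer region.

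The main obstacle is twofold. The technical crux is pinning down the \emph{exact} closed form~\eqref{eq:Tout-bound} from the non-elementary outer integral: the split at $\varepsilon_0$, the direction of each inequality, and the precise constants and $\max/\min$ arguments must all be tracked so that the logarithmic amplitude-ratio term and the power-law term emerge with the stated coefficients. The conceptual difficulty is reconciling the interior fixed-time claim with the disturbance: because $G_{\mathrm{in}}(x)\to 0$ as $x\to 0$, a nonzero $\bar d$ prevents the control from dominating the disturbance arbitrarily close to the origin, so~\eqref{eq:Tin-poly} is really the \emph{nominal} fixed-time bound and the perturbed system attains at most practical fixed-time convergence to a residual neighborhood; making ``the origin is reached'' rigorous would require either restricting the interior analysis to the nominal case or reinforcing $G_{\mathrm{in}}$ near the origin.
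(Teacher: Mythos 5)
Your proposal follows essentially the same route as the paper's own proof: the Lyapunov function $V=|x|$, the Dini-derivative inequality $D^+|x|\le -(k_0-\bar d)-k_1|x|^{\gamma}/(\varepsilon_0^{\gamma}+|x|^{\gamma})$ in the outer region, the split of the reaching phase at $|x|=\varepsilon_0$ with the saturating ratio bounded below by $\tfrac12$ when $|x|\ge\varepsilon_0$ and by $|x|^{\gamma}/(2\varepsilon_0^{\gamma})$ when $|x|\le\varepsilon_0$, and concatenation of the two phases. The only methodological difference is in the inner region, where you derive the fixed-time constant by splitting $\int_0^{\infty}(ar^{\gamma}+br^{\alpha})^{-1}\,dr$ at $r=1$, while the paper simply invokes the ``standard FxT estimate'' (Polyakov's lemma); this is the same computation written out.

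What deserves emphasis is that both ``obstacles'' you flag are not weaknesses of your attempt alone --- they are defects in the paper's own argument. First, the logarithmic term: the paper integrates the constant-rate inequality $\dot V\le-(k_0-\bar d)$ over Range 1 and then asserts the bound $t_1\le\frac{1}{k_0-\bar d}\ln\frac{\max\{|x_0|,\varepsilon_0\}}{\max\{\varepsilon,\varepsilon_0\}}$; but constant-rate decay yields the \emph{linear} bound $\big(\max\{|x_0|,\varepsilon_0\}-\max\{\varepsilon,\varepsilon_0\}\big)/(k_0-\bar d)$, and the logarithm would only follow from $\dot V\le-(k_0-\bar d)V$, which is never established. (Relatedly, the paper's Range-2 step silently drops a factor of $2$ when passing from $\frac{1}{\alpha(1-\gamma)}$ to $\frac{\varepsilon_0^{\gamma}}{k_1(1-\gamma)}$, and the expression $\big(\varepsilon^{1-\gamma}-\min\{|x_0|,\varepsilon_0\}^{1-\gamma}\big)_+$ is identically zero in the relevant case $\varepsilon\le\min\{|x_0|,\varepsilon_0\}$; the arguments should appear in the opposite order.) Your exact-integral setup $T_{\mathrm{out}}\le\int_{\varepsilon}^{|x_0|}\big[(k_0-\bar d)+k_1 r^{\gamma}/(\varepsilon_0^{\gamma}+r^{\gamma})\big]^{-1}dr$ is the correct starting point, and carrying it through honestly produces a linear-plus-power-law bound, not \eqref{eq:Tout-bound} as printed. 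Second, the inner region: the paper writes $\dot V=-aV^{\gamma}-bV^{\alpha}+d(t)\,\mathrm{sgn}(x)\le-aV^{\gamma}-bV^{\alpha}$, discarding a term that can equal $+\bar d$; since $G_{\mathrm{in}}(x)\to0$ as $x\to0$, the gain cannot dominate the disturbance near the origin, so your reading is correct: \eqref{eq:Tin-poly} is a nominal bound, and under a persistent disturbance the perturbed system only achieves practical convergence to a residual set (roughly where $a|x|^{\gamma}+b|x|^{\alpha}\le\bar d$). In short: same approach, and your two flagged difficulties locate precisely the points at which the paper's proof is unsound.
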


\begin{proof}By choosing Lyapunov function as $V=|x|$ and on taking its derivative, one can obtain
\begin{align}
\dot V = \dot x \sign (x) +d(t)\sign (x).
\label{eq: Vdot first order}
\end{align}
Using  ~\eqref{eq: first order control law} and  ~\eqref{eq: Vdot first order}
\begin{align}
    \dot V = -G_{\mathrm{hyb}}(x)+d(t) \sign (x).
    \label{eq: Vdot first order derived}
\end{align}
\noindent\textbf{Case 1:} $|x| > \varepsilon$.\\[0.3em]
 Using  ~\eqref{eq:Gout} and \eqref{eq: Vdot first order derived} results in
\begin{align}
 \dot V \le -(k_0-\bar d) \;- k_1\frac{V^\gamma}{\varepsilon_0^\gamma+V^\gamma} 
 \label{eq: Vdot derived total}
\end{align}


Since $k_0>\bar d$, the right--hand side is strictly negative for all $V>0$, hence $V(t)$ is strictly decreasing.
To obtain an explicit time bound to reach $\varepsilon$, we split the time evolution into two ranges.

\medskip
\noindent\emph{Range 1 ($V\ge \varepsilon_0$):} 
Here $\displaystyle \frac{V^{\gamma}}{\varepsilon_0^{\gamma}+V^{\gamma}} \ge \frac{1}{2}$.\newline 
From \eqref{eq: Vdot derived total},
\begin{equation}
\dot V \le -(k_0-\bar d) - \tfrac{k_1}{2}
\;\le\; -(k_0-\bar d).
\end{equation}
Integrating $\dot V \le -(k_0-\bar d)$ from $V(0)=\max\{|x_0|,\varepsilon_0\}$ down to $V=\max\{\varepsilon,\varepsilon_0\}$
gives
\begin{equation}
t_1 \;\le\; \frac{1}{k_0-\bar d}\,
\ln\!\frac{\max\{|x_0|,\varepsilon_0\}}{\max\{\varepsilon,\varepsilon_0\}}.
\label{eq:t1-range1}
\end{equation}

\medskip
\noindent\emph{Range 2 ($\varepsilon \le V \le \varepsilon_0$):}
Now $\displaystyle
\frac{V^{\gamma}}{\varepsilon_0^{\gamma}+V^{\gamma}}
\ge \frac{V^{\gamma}}{2\varepsilon_0^{\gamma}}$,
hence, removing the constant negative term $-(k_0-\bar d)$ for a conservative bound leads to
\begin{equation}
\dot V \le -\frac{k_1}{2\varepsilon_0^{\gamma}}\,V^{\gamma}
= -\alpha V^{\gamma},
\qquad \alpha =\frac{k_1}{2\varepsilon_0^{\gamma}}.
\end{equation}
Integrating $\dot V \le -\alpha V^{\gamma}$ from $V_{\text{in}}=\min\{|x_0|,\varepsilon_0\}$ down to $V=\varepsilon$
yields
\begin{align}
    \begin{split}
 t_2 &\le \frac{1}{\alpha(1-\gamma)}\Big(\varepsilon^{\,1-\gamma}-V_{\text{in}}^{\,1-\gamma}\Big)_+ \\
& = \frac{\varepsilon_0^{\gamma}}{k_1(1-\gamma)}
      \Big(\varepsilon^{\,1-\gamma}-\min\{|x_0|,\varepsilon_0\}^{\,1-\gamma}\Big)_+ .
\label{eq:t2-range2}       
    \end{split}
\end{align}
Adding \eqref{eq:t1-range1} and \eqref{eq:t2-range2} gives the FT boundary layer-entry bound $T_{\mathrm{out}}$ given by \eqref{eq:Tout-bound}.

\medskip
\noindent\textbf{Case 2:} $|x| \le \varepsilon$.\\[0.3em]
With the control law as mixed-power  ~\eqref{eq:Gin-poly} on substituting in  ~\eqref{eq: Vdot first order derived},
leads to
\begin{equation}
\dot V = -aV^{\gamma}-bV^{\alpha}+d(t)\sign(x)
\le -aV^{\gamma}-bV^{\alpha},
\end{equation}
which yields the standard FxT estimate
\begin{equation}
T_{\mathrm{in}} \;\le\; \frac{1}{a(1-\gamma)} + \frac{1}{b(\alpha-1)}.
\label{eq:Tin-mixedpower}
\end{equation}
Therefore, the total settling time satisfies
$T_{\mathrm{tot}} \le T_{\mathrm{out}} + T_{\mathrm{in}}$.
\end{proof}

\begin{corollary}[Exponential inner (erf) variant]
\label{cor:erf-inner}
Replace the polynomial law \eqref{eq:Gin-poly} inside $|x|\le\varepsilon$  by the exponential function form~\cite{labbadi2023design}
\begin{equation}
G_{\mathrm{in}}(x) = \frac{\sqrt{\pi}}{2}\,U\,e^{x^2},
\qquad U>\frac{2}{\sqrt{\pi}}\bar d,
\label{eq:Gin-erf}
\end{equation}
The time inside the boundary layer follows the fixed bound
\begin{equation}
T_{\mathrm{in}}^{\mathrm{erf}}
\le \frac{\varepsilon}{\frac{\sqrt{\pi}}{2}U-\bar d},
\label{eq:Tin-erf}
\end{equation}
so that $T_{\mathrm{tot}} \le T_{\mathrm{out}} + T_{\mathrm{in}}^{\mathrm{erf}}$.
\end{corollary}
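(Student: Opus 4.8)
The plan is to reuse the Case~2 machinery from the proof of Theorem~\ref{thm:firstorder-hybrid} essentially verbatim, changing only the inner gain. I keep the same Lyapunov candidate $V=|x|$, so that inside the layer $|x|\le\varepsilon$ relation~\eqref{eq: Vdot first order derived} still reads $\dot V=-G_{\mathrm{in}}(x)+d(t)\sign(x)$. Substituting the exponential (erf) law~\eqref{eq:Gin-erf} and bounding the disturbance term through Assumption~\ref{ass:dist} gives
\[
\dot V \;\le\; -\tfrac{\sqrt{\pi}}{2}\,U\,e^{V^{2}}+\bar d .
\]
This is the only place the specific form of the gain enters; everything downstream is elementary.

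The single observation that does the work is that $e^{V^{2}}\ge 1$ for every $V$, since $V^{2}\ge 0$. Geometrically the exponential gain is \emph{smallest at the origin}, so the worst-case decay rate of $V$ occurs exactly where $V=0$; replacing $e^{V^2}$ by its floor therefore loses nothing qualitatively. This yields the constant-rate estimate $\dot V\le -\bigl(\tfrac{\sqrt{\pi}}{2}U-\bar d\bigr)$, and the hypothesis $U>\tfrac{2}{\sqrt{\pi}}\bar d$ is precisely what makes the bracket strictly positive, hence $\dot V$ strictly negative and $V$ strictly decreasing. Integrating this linear bound from the entry value $V\le\varepsilon$ down to $V=0$ produces the claimed
\[
T_{\mathrm{in}}^{\mathrm{erf}}\;\le\;\frac{\varepsilon}{\tfrac{\sqrt{\pi}}{2}U-\bar d};
\]
because the right-hand side depends only on $\varepsilon$ and not on the actual entry state, the bound is uniform (state-independent) across the whole layer, which is exactly the fixed-time character claimed. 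Adding $T_{\mathrm{out}}$ from part~(i) of the theorem closes the corollary.

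There is no genuine obstacle here, since the estimate is deliberately conservative, but I would flag two points worth recording. First, the label ``erf'' is explained by the disturbance-free integration $\int_{0}^{\varepsilon} e^{-V^{2}}\,dV=\tfrac{\sqrt{\pi}}{2}\,\mathrm{erf}(\varepsilon)$, which gives the sharp time $\mathrm{erf}(\varepsilon)/U$; comparing this with the stated bound shows the crude estimate is asymptotically tight as $\varepsilon\to 0$ and merely conservative for larger $\varepsilon$, so little is sacrificed by the $e^{V^2}\ge1$ step. Second, the one detail that should be stated rather than assumed is that the trajectory genuinely remains inside the layer throughout: since $\dot V<0$ for all $V>0$ under the gain~\eqref{eq:Gin-erf}, once $|x|\le\varepsilon$ the state never re-exits, so the worst-case integration limit is indeed $\varepsilon$ and the single-interval integration is legitimate.
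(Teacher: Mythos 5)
Your proof is correct and follows exactly the argument the paper intends: the paper states Corollary~\ref{cor:erf-inner} without an explicit proof, but the stated bound $\varepsilon/\bigl(\tfrac{\sqrt{\pi}}{2}U-\bar d\bigr)$ is precisely what falls out of your constant-rate estimate, i.e.\ reusing Case~2 of Theorem~\ref{thm:firstorder-hybrid} with $V=|x|$ and the floor $e^{V^2}\ge 1$, then integrating $\dot V\le -\bigl(\tfrac{\sqrt{\pi}}{2}U-\bar d\bigr)$ from the worst-case entry value $\varepsilon$. Your two added observations (the erf origin of the name and the invariance of the layer under $\dot V<0$) are sound refinements beyond what the paper records.
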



\begin{remark}[Bounded effort and tuning]
In the outer region, $k_0<G_{\mathrm{out}}(x)<k_0+k_1$, so the control is uniformly bounded even for large $|x|$.
Choose $k_0=\bar d+\Delta$ ($\Delta>0$) to set the robustness margin; then $(k_1,\varepsilon_0,\gamma)$ shape the FT transient and the weight of the nonlinear term.
Inside the boundary layer, either the \emph{polynomial} law \eqref{eq:Gin-poly} gives the tunable FxT constant \eqref{eq:Tin-poly}, 
or the \emph{erf} law \eqref{eq:Gin-erf} yields the tight linear bound \eqref{eq:Tin-erf}, with the exponential factor automatically capped by $e^{\varepsilon^2}$ due to the region switch.
\end{remark}
\subsection{EL Systems}
\begin{theorem}
\label{thm:EL-hybrid}
For the dynamics~\eqref{eq:EL}, the proposed control law
\begin{equation}
\tau = M(q)\ddot q_r + C(q,\dot q)\dot q_r + g(q)
       - K_{\mathrm{hyb}}(s)\,\mathrm{sgn}(s)
\label{eq:EL-control-final}
\end{equation}
drives the tracking error $e = q_r-q_d$ and $\dot e = \dot {q}_r-\dot q_d$ to zero, which eventually leads to sliding surface $s = \dot e + \Lambda e$  to zero in finite time estimated as \eqref{eq: outer region} and \eqref{eq: inner region}.
Here, $\Lambda = \mathrm{diag}(\lambda_1,\dots,\lambda_n),\;\lambda_i>0$. The hybrid gain is
\begin{align}
   K_{\mathrm{hyb}}(s)=\mathrm{diag}(K_1(s_1),\dots,K_n(s_n))
   \label{eq: hybrid gain}
\end{align}
with outer and inner region formulation is considered as
\begin{equation}
K_i(s_i)=
\begin{cases}
\displaystyle k_{0i} + k_{1i}\frac{|s_i|^\gamma}{\varepsilon_0^\gamma+|s_i|^\gamma},
& |s_i|>\varepsilon,\\[1.5ex]
\displaystyle a_i|s_i|^\gamma + b_i|s_i|^\alpha,
& |s_i|\le\varepsilon,
\end{cases}
\label{eq:EL-hybrid-gain}
\end{equation}
and parameters satisfying
\begin{align}
 \begin{split}
     k_{0i}&>\bar d_i, \ k_{1i}>0,\quad a_i>0,\quad b_i>0, \\
\varepsilon_0&>0,\quad 0<\gamma<1<\alpha.
 \end{split}   
\end{align}

ensures the following time bound.

\begin{enumerate}
\item[(i)] \textit{FT  entry to boundary layer.}  
Each component $s_i(t)$ reaches the boundary layer $|s_i|\le\varepsilon$ within finite time bounded by
\begin{align}
T_{\mathrm{out}}^{(i)}
&\le\frac{1}{k_{0i}-\bar d_i}
   \ln\!\frac{\max\{|s_i(0)|,\varepsilon_0\}}{\max\{\varepsilon,\varepsilon_0\}}
\notag\\
&\quad+\frac{\varepsilon_0^\gamma}{k_{1i}(1-\gamma)}
\Big(\varepsilon^{1-\gamma}-\min\{|s_i(0)|,\varepsilon_0\}^{1-\gamma}\Big)_+ .
\label{eq: outer region}
\end{align}
Thus $T_{\mathrm{out}}=\max_i T_{\mathrm{out}}^{(i)}<\infty$.

\item[(ii)] \textit{FxT within the boundary layer.}  
Once $\|s\|_\infty\le\varepsilon$, the sliding variable converges to the origin within a uniform,
state-independent bound
\begin{equation}
T_{\mathrm{in}}
\le \max_{i}
\left\{\frac{1}{a_i(1-\gamma)} + \frac{1}{b_i(\alpha-1)}\right\}.
\label{eq: inner region}
\end{equation}
Hence, the total settling time satisfies
\begin{align}
  T_{\mathrm{tot}} \le T_{\mathrm{out}} + T_{\mathrm{in}}.  
\end{align}
\end{enumerate}
\end{theorem}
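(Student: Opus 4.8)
The plan is to collapse the coupled Euler--Lagrange closed loop onto the same scalar reaching inequality that drives Theorem~\ref{thm:firstorder-hybrid}, and then invoke that theorem coordinate-wise. First I would substitute the control law~\eqref{eq:EL-control-final} into the dynamics~\eqref{eq:EL}. Writing the sliding variable in the Slotine--Li form $s=\dot q-\dot q_r$ (the equivalent of $s=\dot e+\Lambda e$, with $\dot s=\ddot q-\ddot q_r$), every modelled term $M\ddot q_r+C\dot q_r+g$ cancels, leaving the reduced closed-loop dynamics
\[
M(q)\dot s + C(q,\dot q)\,s = -K_{\mathrm{hyb}}(s)\,\mathrm{sgn}(s) + d(t).
\]

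Next I would take the energy-like Lyapunov function $V=\tfrac12 s^\top M(q)\,s$ and differentiate, obtaining $\dot V = s^\top M\dot s + \tfrac12 s^\top \dot M s = s^\top(-Cs - K_{\mathrm{hyb}}\mathrm{sgn}(s)+d)+\tfrac12 s^\top\dot M s$. The crucial structural step is the EL skew-symmetry property: since $\dot M-2C$ is skew-symmetric, $s^\top(\tfrac12\dot M - C)s=0$, so the Coriolis and inertia-rate terms cancel identically and, using Assumption~\ref{ass:dist},
\[
\dot V = -\sum_i K_i(s_i)\,|s_i| + \sum_i s_i d_i \;\le\; -\sum_i \big(K_i(s_i)-\bar d_i\big)\,|s_i|.
\]
Because $K_{\mathrm{hyb}}$ is diagonal and $k_{0i}>\bar d_i$, each summand is nonnegative in the outer region, establishing strict decrease of $V$ and hence reaching of the boundary layer.

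To recover the stated per-component bounds~\eqref{eq: outer region}--\eqref{eq: inner region}, I would isolate each coordinate and argue that, along the reduced dynamics, $|s_i|$ obeys exactly the scalar inequality $\tfrac{d}{dt}|s_i|\le -(K_i(s_i)-\bar d_i)$ treated in Case~1 of Theorem~\ref{thm:firstorder-hybrid}. The outer bound then follows verbatim from the two-range integration (Range~1 with $|s_i|\ge\varepsilon_0$ giving the logarithmic term, Range~2 with $\varepsilon\le|s_i|\le\varepsilon_0$ giving the power-law term), while inside the layer the mixed-power estimate $\dot V_i\le -a_i|s_i|^\gamma-b_i|s_i|^\alpha$ yields the fixed-time bound. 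Taking $T_{\mathrm{out}}=\max_i T_{\mathrm{out}}^{(i)}$ and $T_{\mathrm{in}}=\max_i\{1/(a_i(1-\gamma))+1/(b_i(\alpha-1))\}$ delivers the total bound.

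The main obstacle is precisely this last decoupling step: the matrices $M(q)$ and $C(q,\dot q)$ couple the coordinates, so the exact scalar inequality on each $|s_i|$ does not follow immediately from the vector dynamics. The quadratic form $V=\tfrac12 s^\top M s$ cleanly controls the aggregate norm $\|s\|$ through skew-symmetry, but transferring this to coordinate-wise settling times is the delicate part. I would make it rigorous by sandwiching $\lambda_{\min}(M)\,\|s\|^2 \le 2V \le \lambda_{\max}(M)\,\|s\|^2$, phrasing the reaching and fixed-time estimates in terms of $V$, and then translating back to the componentwise form, rather than asserting exact per-coordinate decoupling; alternatively, one may exploit the diagonal structure of $K_{\mathrm{hyb}}$ together with positive definiteness of $M$ to dominate the cross terms by the diagonal reaching action during the outer phase.
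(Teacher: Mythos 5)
Your proposal follows essentially the same route as the paper's own proof: the same cancellation of the modelled terms to obtain the reduced dynamics $M(q)\dot s + C(q,\dot q)s = -K_{\mathrm{hyb}}(s)\,\mathrm{sgn}(s)+d(t)$, the same energy Lyapunov function $V=\tfrac12 s^\top M(q)\,s$ with the skew-symmetry cancellation giving $\dot V \le \sum_i\bigl(-K_i(s_i)+\bar d_i\bigr)|s_i|$, and the same per-component two-phase integration to recover the bounds. It is worth noting that the decoupling obstacle you flag is genuinely present in the paper's proof as well --- it passes from the aggregate inequality on $V$ to the scalar inequality $\tfrac{d}{dt}|s_i|\le -(K_i(s_i)-\bar d_i)$ without justifying how this survives the coupling through $M^{-1}$ and $C$ --- so your explicit acknowledgment and proposed remedies make your version, if anything, more careful than the original.
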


\begin{proof}
Using time derivative of $s$ and on
substituting the control law \eqref{eq:EL-control-final}
into \eqref{eq:EL} yields the closed-loop sliding dynamics
\[
M(q)\dot s + C(q,\dot q)s = -K_{\mathrm{hyb}}(s)\,\mathrm{sgn}(s) + d(t).
\]
Consider the Lyapunov function
\[
V(s)=\tfrac12 s^\top M(q)s.
\]
Using the EL identity $\dot M - 2C$ skew-symmetric,
\[
\dot V = s^\top(M\dot s + \tfrac12\dot M s)
       = s^\top(-K_{\mathrm{hyb}}(s)\mathrm{sgn}(s) + d(t)).
\]
Thus,
\[
\dot V \le \sum_{i=1}^n (-K_i(s_i)+\bar d_i)|s_i|.
\]

\noindent\textbf{Case 1: $|s_i|>\varepsilon$.}\;
Substituting the outer region gain shows $|s_i|$ strictly decreases.
Splitting the trajectory into the ranges
$|s_i|\ge\varepsilon_0$ and $\varepsilon\le|s_i|\le\varepsilon_0$
and integrating yields $T_{\mathrm{out}}^{(i)}$.

\noindent\textbf{Case 2: $|s_i|\le\varepsilon$.}\;
The inner region gain yields the mixed-power inequality
\[
\dot{|s_i|} \le -a_i|s_i|^\gamma - b_i|s_i|^\alpha,
\]
which satisfies Polyakov’s FxT lemma,
giving the uniform bound $T_{\mathrm{in}}$.

Combining both phases proves the result $T_{\mathrm{tot}} \le T_{\mathrm{out}} + T_{\mathrm{in}}^{\mathrm{erf}}.$.
\end{proof}

\begin{corollary}[Exponential Inner Law]
If the polynomial law within the boundary layer is replaced by
\[
K_i(s_i)=\tfrac{\sqrt\pi}{2}U_i e^{s_i^2},
\qquad
U_i>\tfrac{2}{\sqrt\pi}\bar d_i,
\]
then inside the boundary layer the convergence time satisfies
\[
T_{\mathrm{in}}^{\mathrm{erf}}
\le \max_i\frac{1}{U_i-\frac{2}{\sqrt\pi}\bar d_i},
\]
and since $|s_i|\le\varepsilon$ implies $e^{s_i^2}\le e^{\varepsilon^2}$,
the control remains bounded. Hence
\[
T_{\mathrm{tot}} \le T_{\mathrm{out}} + T_{\mathrm{in}}^{\mathrm{erf}}.
\]
\end{corollary}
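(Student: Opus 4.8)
The plan is to mirror the proof of Theorem~\ref{thm:EL-hybrid}, since the corollary changes only the inner-region gain; the outer region, and hence $T_{\mathrm{out}}$ and its bound~\eqref{eq: outer region}, are inherited verbatim. First I would reuse the closed-loop sliding dynamics $M(q)\dot s + C(q,\dot q)s = -K_{\mathrm{hyb}}(s)\,\mathrm{sgn}(s) + d(t)$ together with the energy Lyapunov function $V=\tfrac12 s^\top M(q)s$ and the skew-symmetry of $\dot M - 2C$, which already produces the aggregate inequality $\dot V \le \sum_i(-K_i(s_i)+\bar d_i)|s_i|$ derived in the theorem. Inside the layer $\|s\|_\infty\le\varepsilon$ I would substitute the exponential gain $K_i(s_i)=\tfrac{\sqrt\pi}{2}U_i e^{s_i^2}$ and reduce, exactly as in the main proof, to the per-component scalar inequality $\dot{|s_i|}\le -\tfrac{\sqrt\pi}{2}U_i e^{s_i^2}+\bar d_i$.

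The core computation is then a separation-of-variables argument. I would factor the right-hand side as $-e^{s_i^2}\big(\tfrac{\sqrt\pi}{2}U_i-\bar d_i e^{-s_i^2}\big)$ and use $e^{-s_i^2}\le 1$ to obtain $\dot{|s_i|}\le -e^{s_i^2}\big(\tfrac{\sqrt\pi}{2}U_i-\bar d_i\big)$, where the hypothesis $U_i>\tfrac{2}{\sqrt\pi}\bar d_i$ guarantees that the constant factor is strictly positive. Multiplying by $e^{-|s_i|^2}$ and integrating the resulting inequality over the reaching interval gives $T_{\mathrm{in}}^{(i)}\le \big(\tfrac{\sqrt\pi}{2}U_i-\bar d_i\big)^{-1}\int_0^\varepsilon e^{-r^2}\,dr$; recognizing $\int_0^\varepsilon e^{-r^2}\,dr=\tfrac{\sqrt\pi}{2}\,\mathrm{erf}(\varepsilon)$ and bounding $\mathrm{erf}(\varepsilon)\le 1$ yields $T_{\mathrm{in}}^{(i)}\le \tfrac{1}{U_i-\frac{2}{\sqrt\pi}\bar d_i}$. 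Taking the maximum over $i$ gives the claimed $T_{\mathrm{in}}^{\mathrm{erf}}$.

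Boundedness and the total-time estimate are then immediate. On the layer, $|s_i|\le\varepsilon$ implies $e^{s_i^2}\le e^{\varepsilon^2}$, so $K_i(s_i)\le \tfrac{\sqrt\pi}{2}U_i e^{\varepsilon^2}$, which caps the control amplitude despite the exponential factor. Since $T_{\mathrm{out}}$ and its bound are unchanged from Theorem~\ref{thm:EL-hybrid}, concatenating the reaching phase and the inner phase yields $T_{\mathrm{tot}}\le T_{\mathrm{out}}+T_{\mathrm{in}}^{\mathrm{erf}}$.

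I expect the main obstacle to be the step the main theorem treats lightly, namely passing rigorously from the aggregate energy inequality $\dot V \le \sum_i(-K_i(s_i)+\bar d_i)|s_i|$ to a genuine per-component inequality $\dot{|s_i|}\le -K_i(s_i)+\bar d_i$, because the inertia coupling in $M(q)$ and the Coriolis term $C(q,\dot q)$ do not decouple the components of $s$. My plan is to handle this by invoking exactly the componentwise reduction already used (and assumed valid) in the proof of Theorem~\ref{thm:EL-hybrid}, so that the corollary inherits that decay structure and only the inner-region erf integration must be redone; a fully rigorous treatment would instead require absorbing the off-diagonal and Coriolis coupling into an effective per-channel disturbance and strengthening the gain condition accordingly.
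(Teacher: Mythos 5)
Your proposal is correct and follows the route the paper intends: the paper states this corollary \emph{without any proof}, as an immediate variant of Theorem~2, and your argument does exactly what that implies --- reuse the theorem's closed-loop sliding dynamics, the energy Lyapunov function $V=\tfrac12 s^\top M(q)s$ with skew-symmetry of $\dot M-2C$, and the per-component reduction, then redo only the inner-layer integration for the exponential gain. Your separation-of-variables step is in fact the \emph{only} computation that yields the bound as stated: multiplying $\dot{|s_i|}\le -\tfrac{\sqrt{\pi}}{2}U_i e^{s_i^2}+\bar d_i$ by $e^{-s_i^2}$ and integrating gives $T_{\mathrm{in}}^{(i)}\le \tfrac{\sqrt{\pi}}{2}\,\mathrm{erf}(\varepsilon)\big/\big(\tfrac{\sqrt{\pi}}{2}U_i-\bar d_i\big)\le 1\big/\big(U_i-\tfrac{2}{\sqrt{\pi}}\bar d_i\big)$, which is the claimed $\varepsilon$-independent constant; note that transplanting the paper's first-order Corollary~1 argument (using $e^{x^2}\ge 1$ to get a constant decay rate) would instead produce $\varepsilon\big/\big(\tfrac{\sqrt{\pi}}{2}U_i-\bar d_i\big)$, which implies the stated bound only when $\varepsilon\le\sqrt{\pi}/2$, so your route is the right one. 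Finally, the decoupling issue you flag --- passing from $\dot V\le\sum_i(-K_i(s_i)+\bar d_i)|s_i|$ to a genuine componentwise inequality $\dot{|s_i|}\le -K_i(s_i)+\bar d_i$ despite the coupling through $M(q)$ and $C(q,\dot q)$ --- is a real gap, but it is a gap in the paper's own proof of Theorem~2 (which simply asserts the componentwise decay), not one introduced by your argument; you inherit it exactly as the paper does.
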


\begin{remark}[Boundedness and Tuning Guidelines]
For $|s_i|>\varepsilon$, the outer gain satisfies
\[
k_{0i} < K_i(s_i) < k_{0i}+k_{1i},
\]
ensuring uniformly bounded control away from the origin.
Choose $k_{0i}=\bar d_i+\Delta_i$ to set the robustness floor.
The parameters $(k_{1i},\gamma,\varepsilon_0)$ shape the mid-range decay speed.
Inside the boundary layer, $(a_i,b_i,\gamma,\alpha)$ determine the state-independent
FxT constant $T_{\mathrm{in}}$.
\end{remark}

\section{Simulation Results}\label{sec: simulation results}
This section presents numerical simulation studies to evaluate the performance of the proposed controller design for the two system dynamics introduced in~\eqref{eq:first-order} and~\eqref{eq:EL}. 
The analysis is first carried out using the first-order integrator model and,
subsequently, the controller is implemented for a trajectory tracking scenario to demonstrate its practical applicability to EL systems.

\subsection{First-order system}
The first set of simulations considers the uncertain first-order integrator system described by~\eqref{eq:first-order}. 
The disturbance, simulation time step, and other numerical settings are provided in Table~\ref{tab:sim-settings-1d}. 
Controller parameters for the proposed Hybrid--Poly and Hybrid--Erf methods, as well as the benchmark SATO-based approach~\cite{li2021simultaneous}, are summarized in Table~\ref{tab:ctrl params compact}. 
All parameters were tuned such that the three controllers yield comparable settling times, thereby ensuring a fair basis for performance comparison.

The tracking performance is evaluated using three quantitative metrics: 
(i) the root mean square (RMS) error, 
\(
\mathrm{RMS} = \sqrt{\frac{1}{T_{\mathrm{tot}}}\int_{0}^{T_{\mathrm{tot}}}\|x(t)\|_2^2\,dt},
\)
(ii) the integral of absolute error (IAE),
\(
\mathrm{IAE} = \int_{0}^{T_{\mathrm{tot}}}\sum_{i=1}^{n}|x_i(t)|\,dt,
\)
and (iii) the average control magnitude,
\(
\overline{\|u\|} = \frac{1}{T_{\mathrm{tot}}}\int_{0}^{T_{\mathrm{tot}}}\sqrt{\sum_{i=1}^{m}u_i^2(t)}\,dt.
\)
The corresponding values are reported in Table~\ref{tab: comparison value}, while representative state and control trajectories are shown in Fig.~\ref{fig: Results for comparative controller design}. The state trajectory is plotted in Fig.~\ref{fig: x1 first order} and the corresponding control profile is shown in Fig.~\ref{fig: control law first order}.
\begin{remark}[Comparative Performance Analysis]
 From Figs.~\ref{fig: x1 first order} and~\ref{fig: control law first order}, it can be observed that both the proposed Hybrid--Poly and Hybrid--Erf controllers attain equilibrium with lower chattering amplitudes compared to the SATO-based method, while exhibiting negligible differences in the overall settling time. 
However, as seen from Table~\ref{tab: comparison value}, the proposed controllers yield improved tracking accuracy (RMS error reduction of approximately $1.9\%$ and IAE reduction of $2.6\%$) and, more importantly, achieve a significant reduction in the average control effort---about $31\%$ compared to the SATO baseline. 
This improvement confirms that the hybrid-gain structure enables smoother and more energy-efficient control action without compromising robustness or convergence speed.   
\end{remark}

To further validate robustness, the proposed controllers were tested under multiple initial conditions, $x(0) = [\,4,\,-5,\,6\,]^{\mathrm{T}}$. 
The resulting state and control trajectories for the polynomial and exponential inner laws are illustrated in Figs.~\ref{fig: Results for proposed method with polynomial inner law} and~\ref{fig: Results for proposed method with exponential inner law}, respectively. 
In both cases, the proposed controllers maintain consistent convergence characteristics, and even for smaller initial errors (e.g., $x(0)=4$), the SATO-based control exhibits noticeably higher actuation levels than the hybrid-gain based design. 
These results further reinforce that the proposed HG--FTSMC ensures uniform FT convergence while substantially reducing control effort.
\subsection{ EL System}
In this case, 
numerical simulations are conducted on a two-link planar manipulator modeled as
a representative EL system described by~\eqref{eq:EL}.
Here $q=[\,q_1,\,q_2\,]^{\mathrm{T}}$ denotes the joint angles of the first and second links, respectively,
and $\dot q=[\,\dot q_1,\,\dot q_2\,]^{\mathrm{T}}$ are their angular velocities. 
Please refer~\cite{kumari2018event} for detailed modeling and simulation parameters.


The goal is to ensure accurate tracking of a desired joint trajectory
\begin{align}
  q_d(t) = [\,0.1\sin(\pi t),\; 0.2\sin(\pi t)\,]^{\mathrm{T}}.  
\end{align}
in the presence of matched external disturbances 
\begin{align} 
  d(t) = [\,1.0\sin t,\; 0.1\cos t\,]^{\mathrm{T}}.  
\end{align}
The inertia matrix is assumed to be bounded such that $\mu_1 I \le M(q) \le \mu_2 I$
with $\mu_1=1$ and $\mu_2=2$. The system parameters and constants are selected as
\begin{align}
\begin{split}
  p_1 &= 3.473~\text{kg\,m}^2, \quad 
p_2 = 0.196~\text{kg\,m}^2, \ 
 \\
p_3 &= 0.242~\text{kg\,m}^2, \ f_{d1}=f_{d2}=1.1~\text{N\,m\,s}
\end{split}
\end{align}
The sliding variable is defined as
\begin{equation}
s = \dot e + C e,
\qquad
C =
\begin{bmatrix}
0.5 & 0\\
0 & 0.5
\end{bmatrix},
\label{eq:EL-sliding}
\end{equation}
where $e=q-q_d$ and $\dot e=\dot q-\dot q_d$.
The hybrid-gain control law $\tau$ follows the form in~\eqref{eq:EL-control-final},
with $K_\mathrm{hyb}(s_i)$ designed according to~\eqref{eq:EL-hybrid-gain}.
The design constants are chosen as
$\alpha=0.2$, $\sigma=0.85$, $L=3$, and the initial conditions are
$q(0)=[\,2,\,2\,]^{\mathrm{T}}$ and $\dot q(0)=[\,0,\,0\,]^{\mathrm{T}}$.
The controller gains are tuned as
$K_1=2.58$, $K_2=1.4$, and $\beta=3.676$. 

Simulation results 
are illustrated in Fig.~\ref{fig: Results for two link manipulator}.
Figure~\ref{fig: States trajectory} shows that both joint angles $q_1(t)$ and $q_2(t)$
accurately track their respective desired trajectories $q_{d1}(t)$ and $q_{d2}(t)$.
The corresponding control torques $u_1$ and $u_2$, shown in Fig.~\ref{fig: Control profile}
The Hybrid--Erf controller produces marginally smoother torque profiles
due to the exponential inner gain term.
The error profiles for $e_1$, $e_2$, and the sliding variable norm $\|s\|$
is depicted in Fig.~\ref{fig: error profiles}.
Both controllers achieve rapid FT convergence of the joint tracking errors
to a small neighborhood around zero and maintain this bound thereafter,
demonstrating robustness to model uncertainties and external disturbances.
\begin{table}
\centering
\caption{Simulation settings for first-order integrator model}
\label{tab:sim-settings-1d}
\renewcommand{\arraystretch}{1.1}
\begin{tabular}{ll}
\toprule
Description & Value \\
\midrule
Disturbance & $d(t)=0.8\,\bar d \sin(5t)$, \quad $\bar d = 0.5$ \\
Step size/Run time & $dt=10^{-3}\ \mathrm{s}$,\; $T=6\ \mathrm{s}$ \\
boundary layer radius & $\varepsilon = 0.08$ \\
Final settle radius & $\delta = 10^{-3}$ \\
Initial condition & $x_0 = 3$ \\
\bottomrule
\end{tabular}
\end{table}

\begin{table}
\centering
\caption{Controller parameters for the three methods}
\label{tab:ctrl params compact}
\renewcommand{\arraystretch}{1.15}
\begin{tabular}{lcc}
\toprule
\textbf{Method} & \textbf{Parameters } & \textbf{Values} \\
\midrule
\multirow{2}{*}{Hybrid--Poly} 
& $k_0, k_1, \varepsilon_0, \gamma $
& $0.8,0.8,0.25,0.7$ \\
& $a, b, \alpha, \varepsilon$& $2.5,1.2,1.8,0.08$ \\
\multirow{2}{*}{Hybrid--Erf}
& $k_0, k_1, \varepsilon_0$
& $0.8,0.8,0.25$ \\
&$\gamma, c_{\mathrm{erf}}, \varepsilon$ & $0.7,1.2,0.08$\\
SATO  
& $K_{\mathrm{nn}}, \sigma, \varepsilon$
& $2.8\,\bar{d},\,10^{-9},\,0.08$ \\
\bottomrule
\end{tabular}
\end{table}

\begin{figure}
    \begin{subfigure}[b]{.25\textwidth}
			\centering			\includegraphics[width=\linewidth]{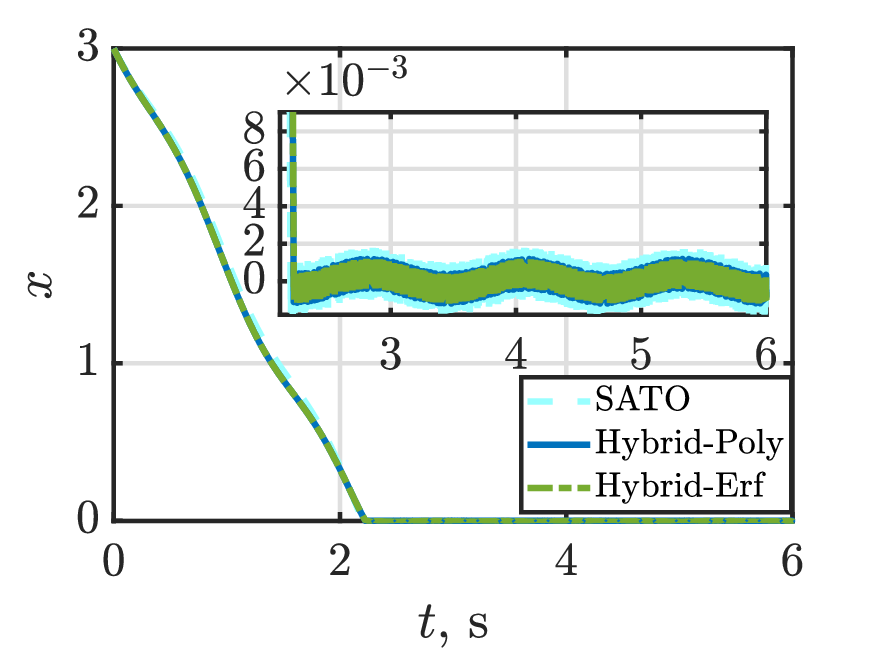}
			\caption{ State trajectory}
			\label{fig: x1 first order}     
		\end{subfigure}%
		\begin{subfigure}[b]{.25\textwidth}
			\centering		\includegraphics[width=\linewidth]{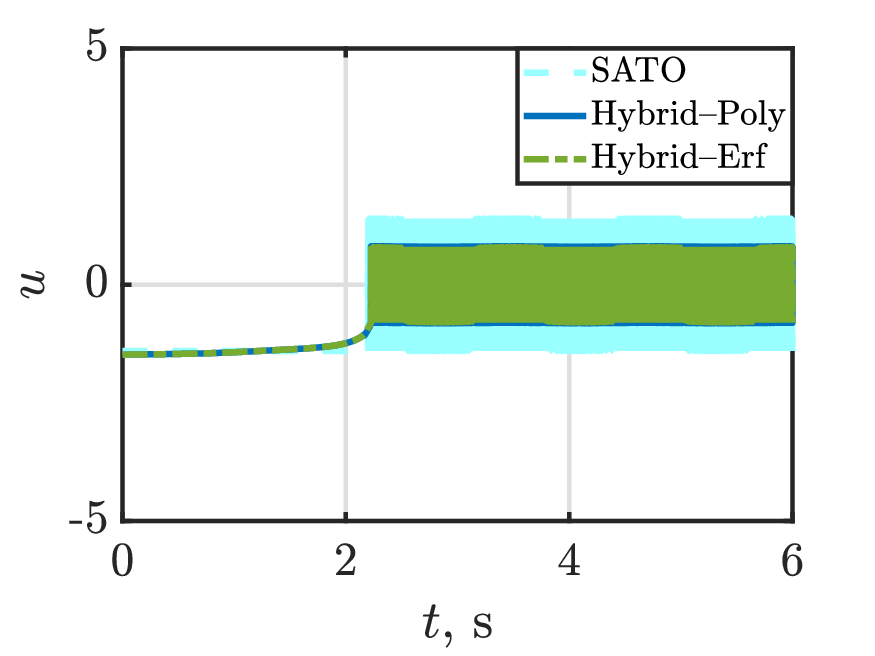}
			\caption{ Control profile }
			\label{fig: control law first order}
		\end{subfigure}%
         \caption{Results for comparative controller design}
			\label{fig: Results for comparative controller design}
\end{figure}
\begin{figure}
    \begin{subfigure}[b]{.25\textwidth}
			\centering			\includegraphics[width=\linewidth]{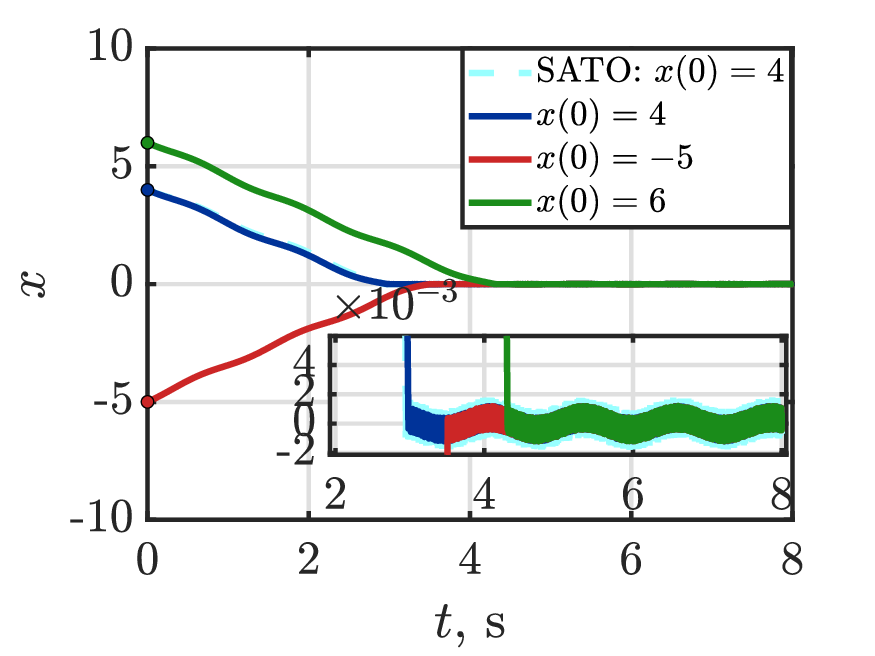}
			\caption{ State trajectory}
			\label{fig: x1 first order poly sota}     
		\end{subfigure}%
		\begin{subfigure}[b]{.25\textwidth}
			\centering		\includegraphics[width=\linewidth]{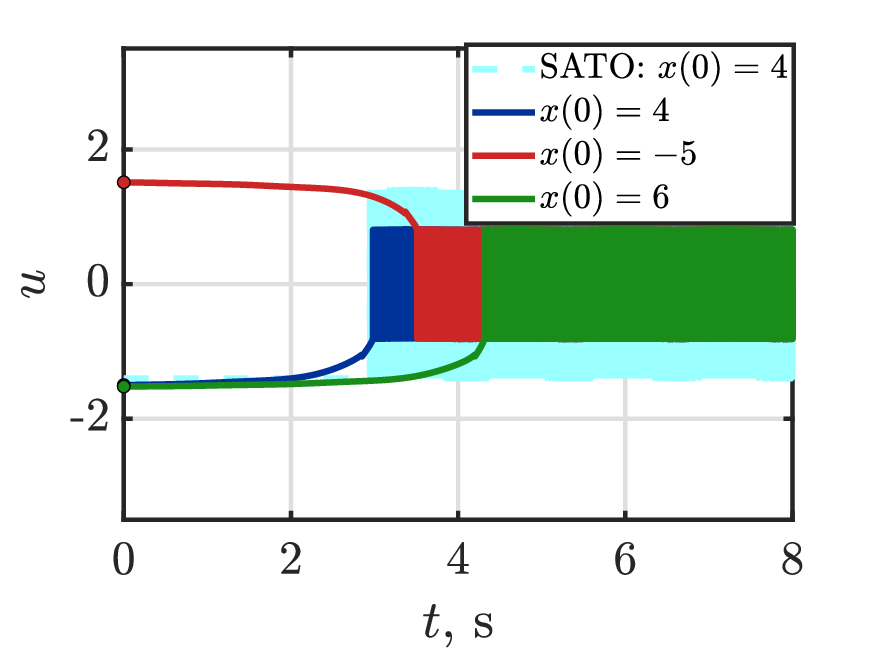}
			\caption{ Control profile }
			\label{fig: control law first order poly sota}
		\end{subfigure}%
         \caption{ Results for HG-FTSMC with polynomial inner law }
			\label{fig: Results for proposed method with polynomial inner law}
\end{figure}
\begin{figure}
    \begin{subfigure}[b]{.25\textwidth}
			\centering		\includegraphics[width=\linewidth]{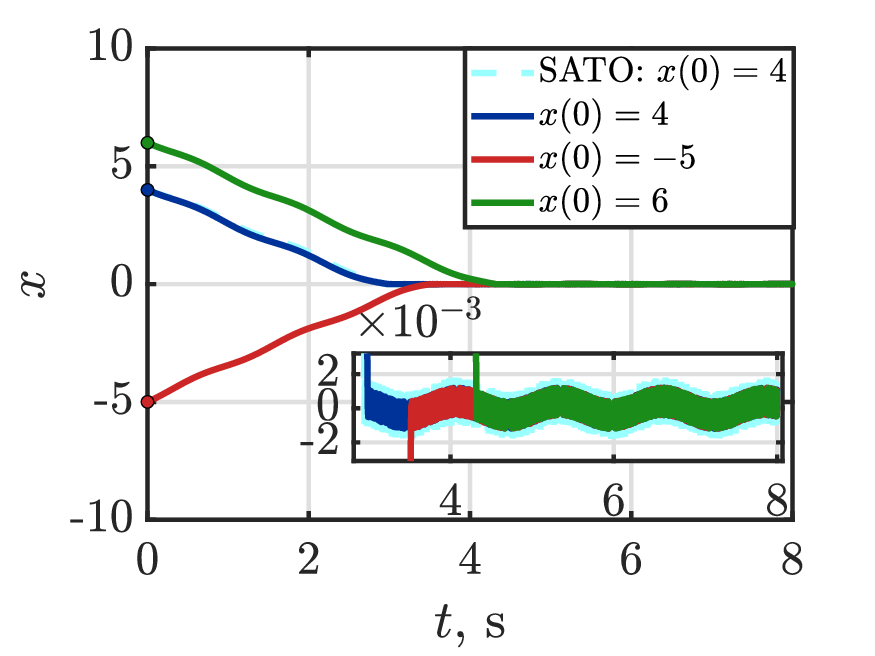}
			\caption{ State trajectory}
			\label{fig: x1 first order erf sota}     
		\end{subfigure}%
		\begin{subfigure}[b]{.25\textwidth}
			\centering		\includegraphics[width=\linewidth]{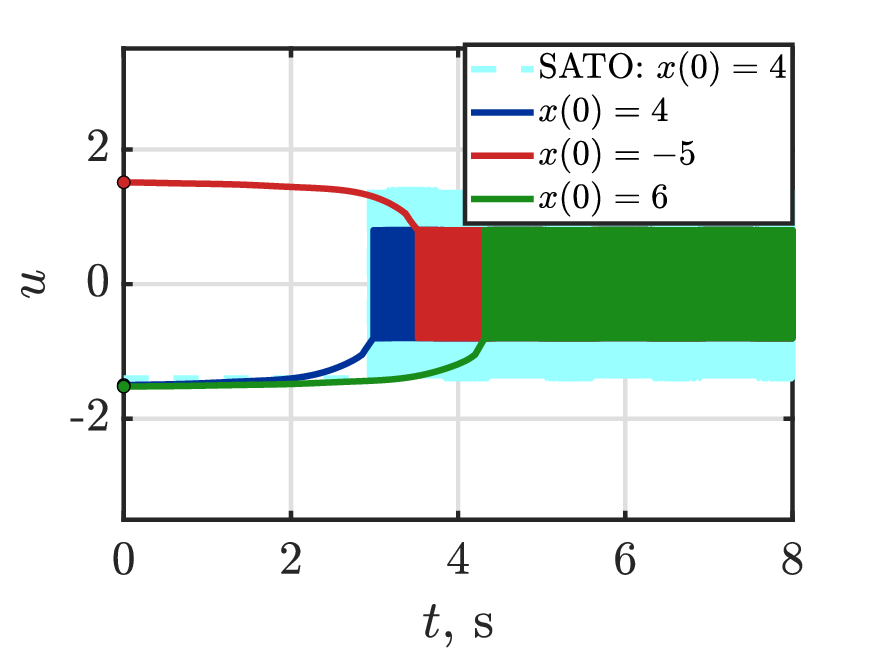}
			\caption{ Control profile }
			\label{fig: control law first order erf sota}
		\end{subfigure}%
        \caption{ Results for HG-FTSMC with exponential inner law }
			\label{fig: Results for proposed method with exponential inner law}
\end{figure}
\vspace{-15pt}
\begin{table}
\centering
\caption{Performance comparison with ($x_0 = 3$).}
\label{tab: comparison value}
\renewcommand{\arraystretch}{1.15}
\begin{tabular}{lcccc}
\toprule
\textbf{Metric} & \textbf{Hybrid-Poly} & \textbf{SATO } & \textbf{Hybrid-Erf}& \textbf{$\%$ decrease} \\
\midrule
$T_{\mathrm{tot}}$ & 7.994 &  8.000 & 7.995 &   $ \longdash$ \\
$x_{\mathrm{rms}}$                     & 0.9145 & 0.9318& 0.9145& \textbf{1.9}\\
$\mathrm{IAE} $          & 3.3160 & 3.4052 & 3.3161& \textbf{2.6} \\
$\|u\| $ & 0.9663 & 1.4000 & 0.9621 &  \textbf{31 }\\
\bottomrule
\end{tabular}
\end{table}

\begin{figure}
		\begin{subfigure}[b]{.25\textwidth}
			\centering			\includegraphics[width=\linewidth]{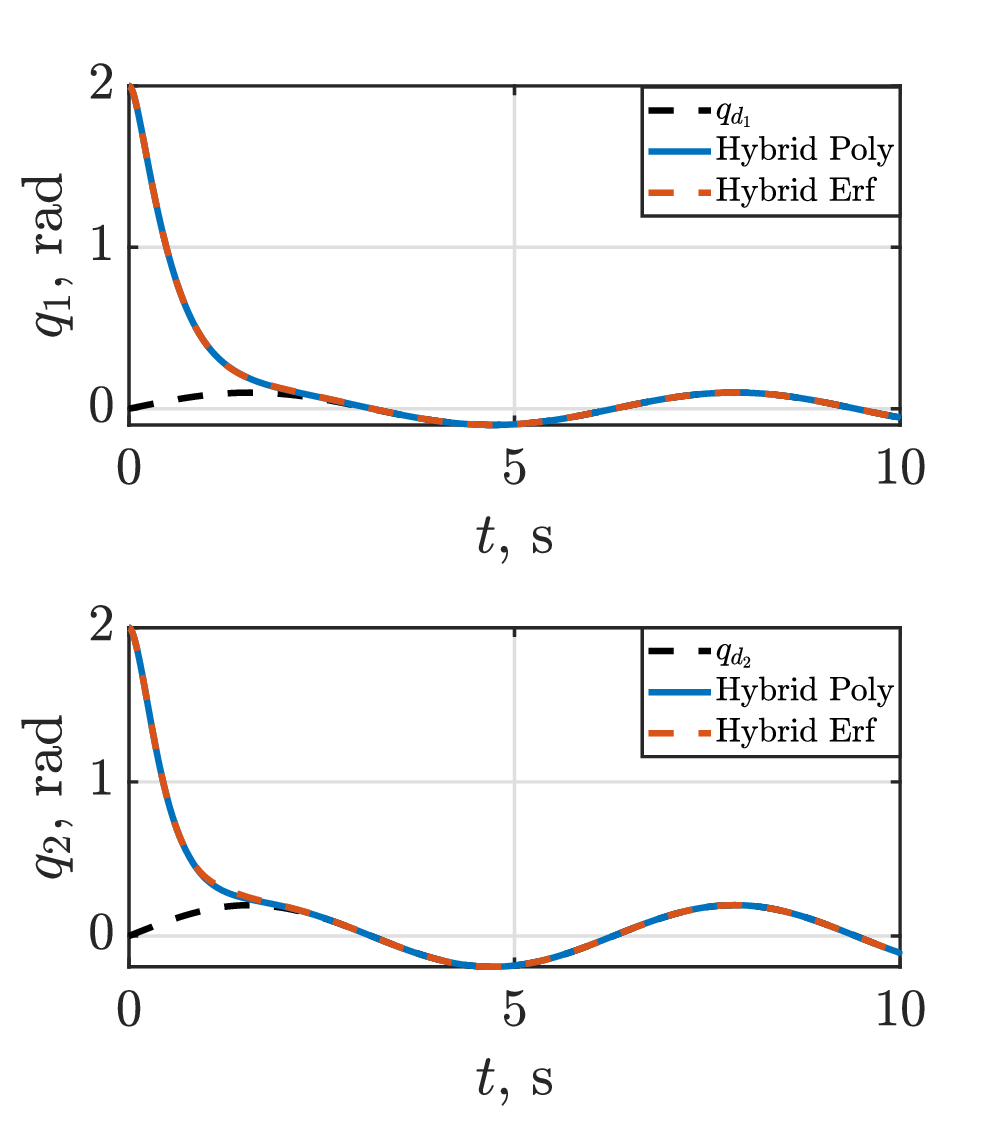}
			\caption{ States trajectory}
			\label{fig: States trajectory}
		\end{subfigure}%
		\begin{subfigure}[b]{.25\textwidth}
			\centering			\includegraphics[width=\linewidth]{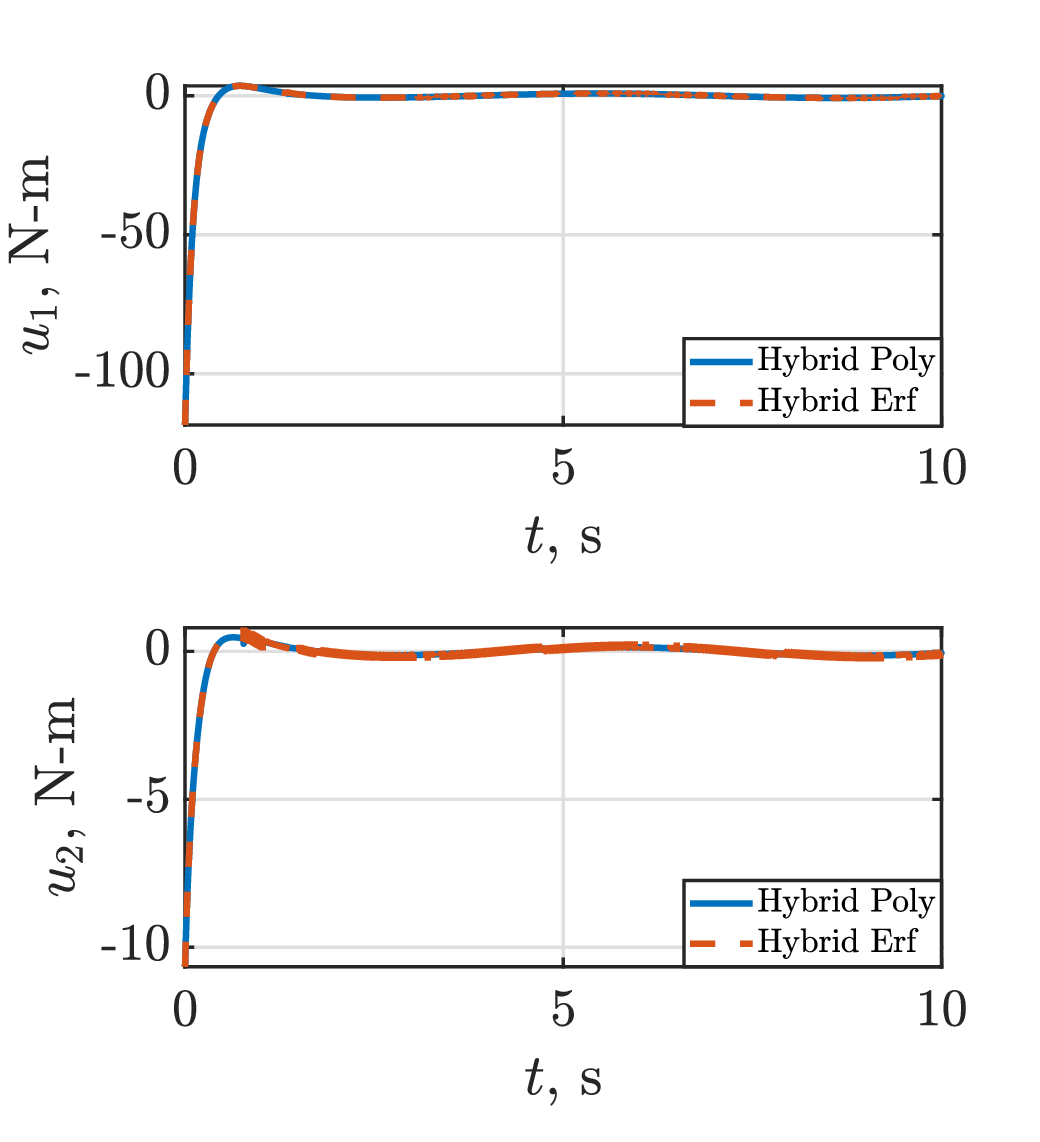}
			\caption{ Control profiles }
			\label{fig: Control profile}
		\end{subfigure}%
        \qquad
        \centering
        \begin{subfigure}[b]{.25\textwidth}			\includegraphics[width=1.4\linewidth]{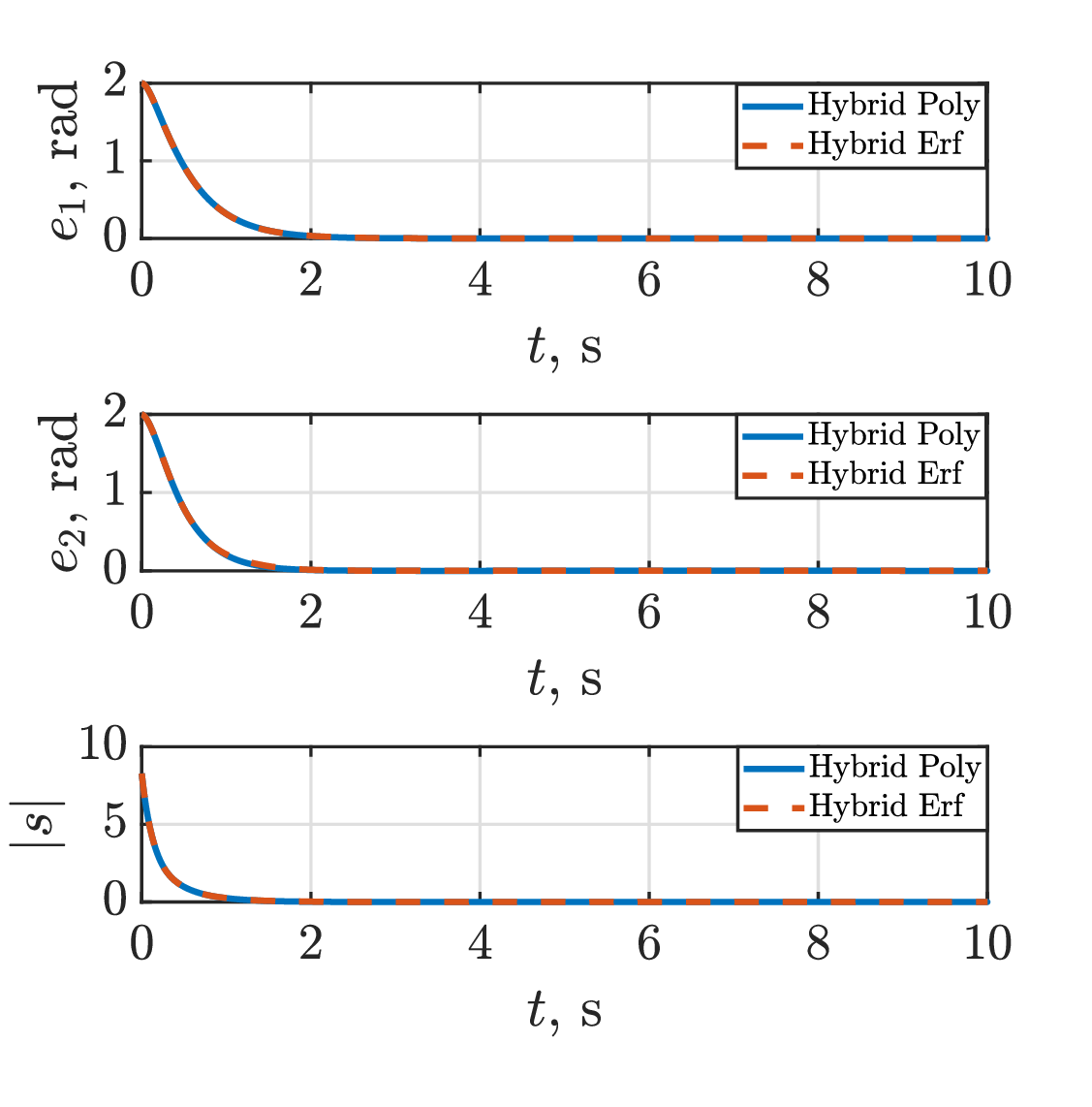}
			\caption{Error profiles}
			\label{fig: error profiles}
		\end{subfigure}%
		\caption{Results for two link manipulator}
		\label{fig: Results for two link manipulator}
	\end{figure}


\section{Conclusion}
\label{sec:conclusion}

This work presented a robust FT sliding-mode control approach based on a hybrid gain structure that achieves rapid convergence with reduced control effort.
By combining a bounded FT law in the outer region with a mixed-power FxT term in the inner region, the HG-FTSMC provides FT reaching, and strong robustness against matched disturbances without inducing excessive control peaks.
Simulations on a perturbed first-order system illustrate that both Hybrid-Poly and Hybrid-Erf variants attain settling times comparable to the established SATO method while reducing average control effort by more than 30$\%$ and offering marginal improvements in tracking accuracy.
A two-link manipulator study further demonstrated the practicality of the proposed framework, yielding smooth and precise trajectory tracking performances.
Overall, the hybrid-gain strategy offers an easily implementable solution for fast, efficient, and robust stabilization of Euler–Lagrange systems. Future research directions include extending the finite-/FxT control framework to systems subject to disturbances with unknown or time-varying upper bounds.
\vspace{-15pt}
\bibliographystyle{ieeetr}
\bibliography{refICC2025.bib}

\end{document}